\def\qed{\relax\ifmmode\hskip2em \fbox{ }\else\unskip\nobreak\hskip1em 
$\fbox{}$\fi}
\newsavebox{\theorembox}
\newsavebox{\lemmabox}
\newsavebox{\corollarybox}
\newsavebox{\propositionbox}
\newsavebox{\examplebox}
\newsavebox{\propertybox}
\savebox{\theorembox}{\bf Theorem}
\savebox{\lemmabox}{\bf Lemma}
\savebox{\corollarybox}{\bf Corollary}
\savebox{\propositionbox}{\bf Proposition}
\savebox{\examplebox}{\bf Example}
\savebox{\propertybox}{\bf Property}
\newtheorem{theorem}{\usebox{\theorembox}}
\newtheorem{lemma}[theorem]{\usebox{\lemmabox}}
\newtheorem{corollary}[theorem]{\usebox{\corollarybox}}
\newtheorem{example}{\usebox{\examplebox}}
\newtheorem{definition}{{\sc Definition}\rm }[section]
\newtheorem{definitions}[definition]{{\sc Definitions\rm }}
\newenvironment{proof}{{\noindent\bf Proof~}}{\(\qed\)\vspace*{\proofskip} }
\newlength{\proofskip}
\let\oldbibliography\thebibliography
\renewcommand{\thebibliography}[1]{%
	\oldbibliography{#1}%
	\setlength{\itemsep}{0pt}%
}
\begin{document}

\begin{center}

{\LARGE 
Fast Approximation Schemes for Bin Packing
}


\footnotesize

\mbox{\large Srikrishnan Divakaran}\\
School of Engineering and Applied Sciences, Ahmedabad University, 
Ahmedabad, Gujarat, India 380 009,
\mbox{
srikrishnan.divakaran@ahduni.edu.in }\\[6pt]
\normalsize
\end{center}

\baselineskip 20pt plus .3pt minus .1pt


\noindent 
\begin{abstract}
We present new approximation schemes for   bin packing based on the following two             approaches: (1) partitioning     the given problem into         mostly identical sub-problems of constant size            and 
then construct a  solution by combining the solutions of  these constant size sub-problems obtained through PTAS or exact methods;    (2) solving   bin packing   using irregular sized  bins, a generalization of bin packing, that           facilitates the  design  of simple  and efficient recursive algorithms that  solve a  problem in terms of smaller sub-problems  such that
the    unused       space  in  bins used by an earlier solved sub-problem is    available to     subsequently solved sub-problems.  
\end{abstract}
\bigskip
\noindent {\it Key words:}  
Keywords: Bin Packing; Approximation Algorithms; Approximation Schemes; PTAS; Exact Algorithms; Heuristics; Design and Analysis of Algorithms.
\noindent\hrulefill
\section{Introduction}
The {\em Bin Packing} problem is a classical combinatorial optimization problem that was first studied in     the 1970's by Garey, Graham   and Ullman\cite{GGU72}         and Johnson\cite{J72}, and can be stated as  follows: 
\begin{quote}
 Given a collection $\cal{B}$ of unit capacity   bins and a    sequence $L= (a_1,a_2, ...,a_n)$   of  $n$  items   with their respective sizes $(s_1,s_2,...,s_n)$ such that $\forall{i}$ $s_i\in [0, 1]$,  determine a packing of the items in $L$ that uses a minimum number  of bins from $\cal{B}$. 
\end{quote}
Bin Packing has a  wide variety of applications\cite{JDUGG74} including cutting stock applications, packing           problems in supply chain management, resource     allocation problems in    distributed systems. Algorithms for bin packing         can be broadly    classified as {\em offline} and {\em online}. Offline         algorithms    are algorithms that pack items with complete knowledge of the  list $L$ of items prior to  packing, whereas online algorithms   need        to   pack items as they   arrive without any knowledge of future. The bin  packing problem even for the   offline version is known to be NP-Hard\cite{GJ79}   and hence has led researchers to the study of polynomial time approximation algorithms         (i.e. provides near optimal solutions). Most of  the initial  research in Bin Packing has been in the  design  of     simple deterministic algorithms and   their    combinatorial          analysis leading  to tighter upper bounds on the performance of these algorithms and tighter  lower bounds  on estimating the optimal offline and online solutions.        Subsequently, there has  been significant work     on probabilistic analysis of these deterministic algorithms       as  well as     on the design of randomized algorithms and approximation schemes for     Bin Packing. For a comprehensive survey of classical algorithms for Bin    Packing from the  perspective    of design and   analysis of approximation  algorithms, we       refer the readers to  Johnson's Phd Thesis\cite{J73}, Coffman et al.\cite{CCGMV13} and Hochbaum\cite{H97} .\newline \newline
 In 
 Bin Packing problem, we are required to pack the items
 in $L$ using minimum number of bins in $\cal{B}$.   In this paper, we  will refer to this classic version  of 
 Bin Packing as the {\em regular bin packing}  problem.
 The   offline  version  of    regular  bin    packing problem is known to be NP-Hard \cite{GJ79} and hence most research efforts have   focused on the the design of fast online and offline  approximation algorithms with good performance. The performance  of  an approximation algorithm is defined in terms of its worst case behavior  as follows: Let $A$ be an algorithm for  bin packing and let $A(L)$ denote the number of bins required       by $A$ to pack items in $L$, and OPT denote the optimal algorithm for    packing items in $L$. Let $\cal{L}$ denote the set of all possible list sequences whose items are of sizes  in $[0, 1]$. For every $k > 1$, $R_{A}(k) = sup_{L \in \cal{L}}^{} \{ A(L)/k  :  OPT(L)
  = k \}$. Then the asymptotic         worst    case  ratio is given by $R_{A}^{\infty} = lim_{k \rightarrow \infty} R_{A}(k)$.  This   ratio is    the  asymptotic approximation ratio  and measures   the quality of the algorithms packing in comparison to the optimal packing in the worst case scenario. The   second way of measuring the performance of an approximation algorithm      is  $sup_{L \in \cal{L}}^{} \{ A(L) / OPT(L)\}$   and this ratio is the absolute approximation ratio of the algorithm. In       the case of online algorithms this ratio is often referred to  as competitive ratio. \newline   
{\bf Online Algorithms}: $\text{NEXT-FIT(NF)}$,         $\text{FIRST-FIT(FF)}$  and $\text{BEST-FIT(BF)}$ are the three most natural online algorithms for regular  bin packing   that   has   been widely studied in the literature. These three algorithms   are a part of a larger class of algorithms    called {\em Any Fit (AF)} Algorithms   that at any time packs  an item into an empty bin only    if it does not fit into any already open bin. Johnson et al.\cite{J73, J74, JDUGG74}    showed that both  $FF$ and $BF$     have  an   asymptotic competitive ratio of $1.7$. Johnson  showed     that no $AF$ algorithm    can improve upon $FF$.  Yao's $\text{REVISED-FF(RFF)}$\cite{Y80}      was   the first  non-$AF$ online algorithm           with an  asymptotic competitive  ratio         of $5/3$ and    was based on $FF$ but    essentially classifies items into types based on        their sizes  and          uses separate bins for    different item types. Later    Lee and Lee\cite{LL85} generalized this idea        and  designed $Harmonic-Fit_{k} (HF_k)$ with asymptotic competitive ratio  $\approx  1.69103$. There   are many other           variants of $Harmonic$ and the best          among them is the algorithm of Seiden\cite{S02} and has an asymptotic   competitive ratio of $1.5889$. More recently,    Balogh et al.\cite{BBDSV15}     settled this online problem by presenting    an optimal online bin packing with absolute worst case competitive  ratio of $5/3$. \newline  {\bf Offline Algorithms}: The most natural offline   algorithms    first reorder the items and then employing other classical online algorithms   like $NF$, $FF$, $BF$ or other online      algorithms to pack the items. This  has   resulted    in three simple  but effective offline algorithms; they are denoted by $NFD$, $FFD$, and $BFD$, with the “D”           standing for “Decreasing". The sorting needs      $O(n log n)$ time and so the total running time of each of these    algorithms is $O(n log n)$.  Baker and   Coffman\cite{BC81} established    the asymptotic approximation ratio for $NFD$ to be $\approx 1.69103$, Johnson et al.\cite{JDUGG74}  established $FFD$ and $BFD$'s aymptotic      approximation ratio to       be $11/9$. Subsequently, Baker \cite{B85} and Yue\cite{Y91}. and Csirik\cite{C93} and Xu\cite{X00} presented simplified proofs. The first improvement over  $FFD$ was due to Yao's   $\text{Refined-First-Fit Decreasing (RFFD)}$\cite{Y80}  with an symptotic approximation ratio $=11/9- 10^7$.  This            was an $O(n^{10}logn)$ time        algorithm. Garey and Johnson\cite{GJ85} then   proposed      Modified First Fit (MFFD) , which   essentially packs the items with sizes in $(1/6, 1/3]$ after packing all items $>   1/3$, and then proved $R_{MFFD} = 71/60 = 1.183333$.  Friesen and Langsten\cite{FL91} also proposed two simple algorithms  $Best-Two-Fit (B2F)$ and $Combined Algorithm (CFB)$ that combines $B2F$          and $FFD$ with asymptotic approximation ratios $1.25$    and   $1.2$ respectively. \newline  
{\bf Asymptotic Approximation Schemes}: Fernandez de la Vega and Lueker \cite{FL81} presented     a PTAS that for any $\epsilon > 0$, designed an $C_{\epsilon} + Cnlog(1/\epsilon)$ time   algorithm $A$ with asymptotic worst case ratio $R_{A} \le 1 + \epsilon$, where $C_{\epsilon}$ and $C$ are constants that depend   on $\epsilon$. Johnson\cite{J82} observed  that if $\epsilon$ is             allowed to grow slowly  when compared to $OPT(L)$ then more efficient approximation            schemes can be constructed. This   was incorporated by Karmarkar and Karp\cite{KK82} to obtain an approximation  scheme where $A(L) \leq OPT(L) + O(OPT(L)^{1- \delta})$, for some positive constant $\delta$.   Using the idea of dual approximation algorithms, Hochbaum and Shmoys\cite{HS86, HS87}         present approximation schemes        for bin packing developed using polynomial approximation scheme for makespan.
\subsection{Our Results}
In this paper, we present fast polynomial time approximation  schemes for bin packing based on the following two approaches: (i)  {\em Near 
Identical Partitioning} and (ii) {\em Irregular    Bin Packing}.
Our approximation schemes     in some non-trivial special       cases yield asymptotically optimal solutions. \newline 
{\bf Near Identical Partitioning Approach}: In  this approach,  we  present an algorithm that (i) for  some real              number $\delta\in (0,\frac{1}{2})$ partitions      the input    sequence $L$ into $l$ identical sub-sequences $L_c$ (except for the last sub-sequence) of         length $c\in [\lceil \frac {1}{\delta} \rceil,\lceil\frac{2}{\delta }   \rceil]$ (i.e. sum of sizes of items in these subsequences is  $c$); (ii) determines     the optimal packing for $L_c$  and the last sub-sequence using  an      existing polynomial time approximation scheme for regular bin packing; and (iii) constructs   the packing for $L$ by concatenating the packing for the $l$ copies of $L_c$ and the          packing of the last sub-sequence.  \newline  
{\bf Irregular Bin Packing Approach}: In this approach,  we  view the regular bin packing problem as a special case       of {\em irregular bin packing}, a slight generalization, that can be defined as follows:
\begin{quote}
{\bf Irregular Bin Packing}: 
Given a sequence $L= (a_1,a_2,...,a_n)$ of $n$ items with their respective sizes $(s_1,s_2,...,s_n)$ such that $\forall{i}$ $s_i \in [0, 1]$, and    a collection $\cal{B}$ of $m$ bins with respective capacities   $c_1, c_2,..., c_m$ such that 
$\forall{i}$, $c_i \in [0, 1]$,   we   need   to     design an algorithm to pack
the items of $L$ among the bins in $\cal{B}$ that minimizes the opening/use   of {\em regular bins}, where    a   bin  is   {\em regular} if  its capacity is $1$ and {\em irregular} otherwise.
\end{quote}
Notice if all the bins in $\cal{B}$ are of unit capacity (i.e.   regular )  then the irregular bin packing reduces to the regular bin        packing  problem. In irregular    bin packing, if     an item is assigned to an irregular bin  (a bin with capacity $< 1$) then we   do    not charge the assigned item since that bin was already open. However, if        an item is assigned to a regular  bin  then we charge the assigned item $1$ unit  since it opens that bin.  
This     formulation   facilitates the  design  of simple and   efficient approximation schemes that recursively  solve  a problem in terms of smaller but similar sub-problems that  exploit the unused  space  in bins used by an earlier solved      sub-problem     by subsequently solved  sub-problems. \newline \newline
{\bf Paper Outline}: 
The rest of this 
paper is organized as follows: 
In Section $2$ we present a PTAS for bin packing through partitioning as described earlier, and in Section $3$ we present a PTAS for bin packing through a dynamic program for irregular bin packing.
\section{Bin Packing Through Near Identical Partitioning}
In this section, we     present    an algorithm that given a real valued 
parameter $\epsilon \in (0, \frac{1}{2})$, partitions the input sequence $L$ into  identical sub-sequences     (except for the last sub-sequence) of length $c \in [1, \lceil \frac {2}{\epsilon} \rceil]$ (i.e.   sum  of sizes of items in these subsequences is $c$) and then packs the items in these $c$-length     subsequences       onto  unit capacity bins    with  wastage (unused space) of at    most $\delta \in (\epsilon,\frac{1}{2})$
using   an existing polynomial time approximation scheme for regular bin packing.   Now, we introduce some    necessary terms and definitions and examples illustrating our key idea before presenting  our  approximation scheme and its analysis. 
\begin{definitions} 
	The sequence $L=(a_1,a_2, ..., a_n)$   with $k$ distinct   item sizes    \{$s_1, s_2, ..., s_k$\}       can       be       viewed as    a  $k$    dimensional vector $\hat{d}(L) = (n_1*s_1,n_2 *s_2,...,n_k*s_k)$, where for  $i \in [1..k]$, $n_i$ is the number of items of      type  $i$ (size $s_i$); we             refer to $\hat{d}(L)$  as  the distribution vector corresponding to $L$. For a   given real number $c > 1$, 
	let  $\hat{d}_{c}(L)$  denote a  $c$-length segment  of $\hat{d}(L)$ (i.e. a vector that is parallel to $\hat{d}(L)$ and contains its initial segment such that its component  sum equals   $c$).
\end{definitions} 

\begin{definitions}
	For a real number $\delta \in (0,\frac{1}{2})$, the  configuration of a unit capacity    bin containing items whose sizes are  $\{ s_1, s_2, ..., s_k \}$ and    has a     wastage of at most $\delta$ can be   specified by a $k$-dimensional  vector whose $i^{th}$ component, for $i\in [1..k]$, is the sum of   sizes of  items of type $i$ (size $s_i$) in that bin; and its length is    in the interval $[1-\delta, 1]$,  where  the length of      a           vector  is defined  to be the sum of  its components.  We refer to  such  a  vector  as a $(1 - \delta)$-vector (bin configuration)  consistent              with $L$; and we denote by $e_{\delta}(L)$ the set of all  ($1-\delta$)-vectors (bin configurations) consistent with $L$. 
\end{definitions} 
Note: For certain sequences $L$, the item sizes in $L$  may be such that for some $\delta \in (0, \frac{1}{2})$ there are no $1-\delta$ vectors consistent with $L$ (i.e. $e_{\delta}(L)$ is empty). 
\begin{definitions}
	For a given sequence $L$ and a real number $\delta\in (0,1/2]$, if $e_{\delta}(L)$ is non-empty then we define
	\begin{itemize} 
		\item [-]  a  $\delta$-cover for $\hat{d}(L)$ to be  a minimal collection of  $(1-\delta)$-vectors   from $e_{\delta}(L)$   such that for $i \in [1..k]$, the sum of the $i$th component of these collection  of vectors  is greater than or equal to the $i$th component of   $\hat{d}(L)$; 
		\item [-] $\text{min-cover}_{\delta}(\hat{d}(L))$ to be a $\delta$-cover for $\hat{d}(L)$ of the smallest size;
		\item [-] $\text{min-cover}(\hat{d}(L)) =\min_{\delta \in (0, 
			\frac{1}{2})}^{} \{ \text{min-cover}_{\delta}(\hat{d}(L)) \}$.  
	\end{itemize} 
\end{definitions}
{\bf Remark} : 
If    the  number of   distinct sizes in $L$ is not  bounded  by a constant $k$, then we   can   still  apply the above  idea by partitioning the  interval   $[0,1]$ into $k$   distinct     sizes   $0, 1/k, 2/k, ..., 1$ and    round  the   item  sizes  in  $L$   to  the nearest     multiple  of $1/k$ that is   greater than or  equal to the item size.  \newline \newline 
{\bf Key Idea}: For an integer $c^{*} \in [1, \lceil \frac{2}{\epsilon} \rceil]$, we partition  the distribution vector    $\hat{d}(L)$      into many copies of $\hat{d}_{c^{*}}(L)$, the $c^{*}$ length segment of 
$\hat{d}(L)$ (except for     the last segment), where $c^{*}$ is determined as follows:
For each $c \in [1, \lceil \frac{1}{\epsilon} \rceil]$, we determine 
$\delta_{c}$ to be a real number $\delta \in (\epsilon, 
\frac{1}{2})$ for which $\hat{d}_{c}(L)$ has the  smallest 
$\delta$-cover (i.e. $\text{min-cover}_{\delta_{c}}(\hat{d}_{c}(L)) =
\min_{\delta \in (\epsilon, \frac{1}{2})}^{} 
\text{min-cover}_{\delta}(\hat{d}_{c}(L))$). Then, we determine    
$c^{*}$ to be an integer in $[1, \lceil \frac{2} {\epsilon}\rceil]$ that
minimizes the packing ratio (ie.
$\frac{\text{min-cover}_{\delta_{c^{*}}}(\hat{d}_{c^{*}}(L))}{c^{*}} = 
\min_{c \ \in (1, \lceil \frac{2}{\epsilon} \rceil)}^{} 
\frac{\text{min-cover}_{\delta_{c}}(\hat{d}_{c}(L))}{c}$).
\begin{example}
	Let us consider a     sequence $L$ of $3000$ items consisting of $600$ items of size  $0.52$, $600$ items of size $0.29$, $600$ items of size $0.27$ and $1200$ items of  size $0.21$. Let $\epsilon = 0.1$ is the approximation ratio desired.  For this instance the distribution vector $\hat{d}(L)$ is a     $4$-dimensional vector $(0.21*1200,0.27*600,0.29*600,0.52*600)=(252,162,174,312)$    of  length $900$. Our  algorithm        attempts to  partition $\hat{d}(L)$ into         a $c$-segment vector for some  $c$                                        between $(1, \lceil\frac{2}{\epsilon}\rceil)$.  We can observe that we can partition $\hat{d}(L)$ into $60$ copies of the segment vector $(4.2,2.7,2.9,5.2)=(0.21*20, 0.27*10,0.29*10, 0.52*10)$ of length $15$. For $\delta \le 0.1$, the  minimum sized $\delta$ cover for this segment vector of length $15$  can be determined using any of the existing PTAS or exact algorithms for regular bin packing.
\end{example}
\begin{example}
	Let us consider a     sequence $L$ of $3000$ items consisting of $1000$ items of size  $0.60$, $1000$ items of size $0.65$,  and $1000$ items of  size $0.75$. Let $\epsilon = 0.1$ is the approximation ratio desired.  For this instance the distribution vector $\hat{d}(L)$ is a     $3$-dimensional vector $(0.60*1000,0.65*1000,0.75*1000)=(600, 650, 750)$    of  length $2000$. Our  algorithm        attempts to  partition $\hat{d}(L)$ into         a $c$-segment vector for some  $c$                                        between $(1,       \lceil \frac{2}{\epsilon}\rceil)$. We can observe      that  we can partition $\hat{d}(L)$ into $100$ copies of the segment vector $(6.0,6.50,7.50)=(0.60*10, 0.65*10,0.75*10)$  of length $20$. The           minimum sized $\delta$ cover for this segment vector of length $20$  can be determined using any of the existing PTAS or exact algorithms for regular bin packing. In this instance there 
	are no $\delta$-covers for $\delta < 0.4$. 
\end{example} 
{\bf ALGORITHM B($L$, $\epsilon$)}
\begin{tabbing}
	Input(s): \=  (1) \= $L$ \  = \=  $(a_1,a_2,...,a_n)$ \  be the sequence of $n$ items with  their     respective sizes  \\ 
	\>      \>         \> $(s_1,s_2, ... ,s_n)$ in the interval $[0, 1]$; \\
	\> (2) \> $\epsilon \in (0, \frac{1}{2})$ be a user specified parameter; \\
	Output(s): The assignment of the items in $L$ to the bins in $\cal{B}$; \\
	Begin \= \\ 
	\> (1) \= Let \= $\hat{d}(L) = (s_1*n_1, s_2*n_2, ..., s_k*n_k)$   be the distribution vector corresponding to $L$; \\
	\> (2) \= For \= ($c=1$; $c \le \lceil \frac{2}{\epsilon}\rceil$; $c= c+1$) \\
	\> (2a)    \>     \> Let $\hat{d}_{c}(L) = (s_1*n^{c}_1, s_2*n^{c}_2, ..., s_k * n^{c}_k)$  be the $c$-length segment of $\hat{d}(L)$; \\
	\> (2b)    \>     \> For \= ($\delta=\epsilon$; $\delta \le \frac{1}{2}$; 
	                         $\delta=\delta+1$) \\
	\>     \>     \>     \>
	                         $Cover_{\delta}(\hat{d}_{c}(L)) = 
							 \begin{cases}
	                         \text{min-cover}_{\delta}(\hat{d}_{c}(L))               & \text{if } e_{\delta}(L) \ne \Phi \\
	                         \Phi               & \text{otherwise} \\
	                         \end{cases}$ \\
	\>  (2c)   \>     \> Let $\delta_{c} \in (\epsilon, \frac{1}{2})$ be 
	a multiple of $\epsilon$ such that $|Cover_{\delta_{c}}(\hat{d}_{c}(L))|
	 = \min_{\delta \in (\epsilon, \frac{1}{2})}^{} 
	 |Cover_{\delta}(\hat{d}_{c}(L))|$  \\
	\>  (3) Let $c^{*}$ be an integer in $(1, \lceil \frac{2}{\epsilon} \rceil)$ such that  $\frac{|Cover_{\delta_{c^{*}}}(\hat{d}_{c^{*}}(L))|}{c^{*}} =  
	    \min_{c \ \in (1, \lceil \frac{2}{\epsilon} \rceil)}^{} 
	    \frac{|Cover_{\delta_{c}}(\hat{d}_{c}(L))|}{c}$; \\
	\>  (4) Let $T= \hat{d}_{c^{*}}(L)$  and $l = \frac{\hat{d}(L)}{c^{*}}$; \\
	\>  (5) \> Let $\text{Cover}(\hat{d}(L)) = \bigcup \limits_{i=1}^{l} \text{min-cover}(T) \cup \text{min-cover}(\hat{d}(L)-l*T)$; \\
	\> (6) \> return $\text{Cover}(\hat{d}(L))$ \\
	End 
\end{tabbing}
{\bf Determining $\text{min-cover}(\hat{d}_{c}(L))$}: For    determining 
the min-cover of $\hat{d}_{c}(L)$, we  need to determine     a  smallest sized collection of vectors from $e_{\delta}(L)$, $\delta \in [\epsilon, \frac{1}{2}]$, such   that for $i\in [1..k]$,    the sum of    the $i$th components of             these vectors is greater than or  equal to the $i$th component of    $\hat{d}_{c}(L)$. For this we  make use of the PTAS       result of Karmarkar and Karp\cite{KK82} to determine  a $\delta$-cover that is of size $(1 + \epsilon)|\text{min-cover}(T)|$ in polynomial time. We now introduce  some definitions that will help us present the analysis of 
Algorithm $B$. 
\begin{definitions}
	For notational convenience, let $T= (t_1, t_2,..., t_k)$ denote $\hat{d}_{c}(L)$, the $c$-length initial segment of $\hat{d}(L)$. 
	Let $\delta \in [\epsilon, \frac{1}{2}]$ be a real number and 
	$N=|e_{\delta}(L)|$          denote         the number of $1-\delta$ configurations consistent with $L$.  Let $C_1, C_2,..., C_{N}$ denote the complete enumeration of the $1-\delta$                  vectors (bins) consistent with $L$, where $c_{ij}$  denotes the  $i$th component of $C_j$. 
\end{definitions}  
Let $x_j$ denote  the number of bins packed according to  configuration $C_j$. Notice that the minimum $\delta$-cover for $T$ can be                solved using the PTAS for regular bin packing originally due to Fernandez de la Vega and Leuker\cite{FL81}, and later improved by Karmarkar and Karp\cite{KK82}. In this PTAS, the bin packing problem is  formulated as an integer program as follows:
\begin{equation} 
\text{minimize} \sum_{j=1}^{N} x_{j}
\end{equation} 
subject to 
\begin{equation} 
\sum_{j=1}^{N} c_{ij} x_{j} \geq t_i   \ \ \   i = 1, ..., k \\
\end{equation} 
\begin{equation} 
x_{j}\in N \ \ \  j=1, ..., N
\end{equation} 
\begin{definitions}
	Let Algorithm $B(L, \epsilon)$ partition $\hat{d}(L)$ into $l$ copies 
	of $T = \hat{d}_{c}(L)= (s_1*n^{c}_1, s_2*n^{c}_2,...,s_k*n^{c}_k)$ (discarding the last segment), 
	where $c$ is an integer in $[1, \lceil\frac{2} {\epsilon}\rceil]$ and
	$T$ is a $c$-length initial segment of $\hat{d}(L)$.
	Let $T^{t}=        (s_1 * \lfloor n^{c'}_1 \rfloor , s_2* \lfloor n^{c'}_2   \rfloor, ..., s_k * \lfloor n^{c'}_k \rfloor)$  be the segment vector  obtained by truncating for $i \in [1..k]$, the $i$th components of $T$ to the nearest integer multiple of $s_i$.
	Let $\text{Cover}(\hat{d}(L))$ be the $\delta$-cover determined by Algorithm $B$ for $\hat{d}(L)$. 
	\end{definitions}
\begin{theorem}
	$|Cover(\hat{d}(L))| \leq |\text{min-cover}(\hat{d}(L))| + k* l + 2c$.
\end{theorem}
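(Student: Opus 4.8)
The plan is to bound the two pieces of the cover that Algorithm $B$ outputs separately. By the construction in step (5),
$$|Cover(\hat{d}(L))| = l\,|\text{min-cover}(T)| + |\text{min-cover}(\hat{d}(L)-l\,T)|,$$
where $T=\hat{d}_{c}(L)$ and $l=\lfloor S/c\rfloor$ with $S$ the sum of the components of $\hat{d}(L)$. I would first show that $l\,|\text{min-cover}(T)|$ exceeds the true optimum $|\text{min-cover}(\hat{d}(L))|$ by at most $k\,l$, and then that the leftover segment $R=\hat{d}(L)-l\,T$ needs at most $2c$ bins; adding the two contributions gives the claimed inequality.

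The heart of the argument is the per-copy estimate
$$|\text{min-cover}(T)| \le \frac{|\text{min-cover}(\hat{d}(L))|}{l} + k.$$
To prove it I would work with the integer program (1)--(3) for the min-cover of $T$, taken at the value $\delta^{*}\in(0,\tfrac12)$ that realizes $\text{min-cover}(\hat{d}(L))$, so that all configurations $C_1,\dots,C_N$ are $(1-\delta^{*})$-vectors consistent with $L$ (and $e_{\delta^{*}}(L)\ne\emptyset$). Because $l\,c\le S$ we have $T=(c/S)\,\hat{d}(L)\le \hat{d}(L)/l$ componentwise; hence scaling an optimal integral cover $\{x_j^{*}\}$ of $\hat{d}(L)$ by $1/l$ yields $\{x_j^{*}/l\}$, a feasible solution of the LP relaxation of (1)--(3) for $T$ with objective value $|\text{min-cover}(\hat{d}(L))|/l$. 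Taking a basic optimal LP solution, which has at most $k$ positive coordinates (the constraint matrix has $k$ rows, so a vertex has at most $k$ nonzero variables), and rounding every coordinate up to the nearest integer preserves feasibility while increasing the objective by at most $k$. This produces an integral $\delta^{*}$-cover of $T$, whence $|\text{min-cover}(T)|\le|\text{min-cover}_{\delta^{*}}(T)|\le |\text{min-cover}(\hat{d}(L))|/l + k$. Multiplying through by $l$ gives $l\,|\text{min-cover}(T)|\le|\text{min-cover}(\hat{d}(L))| + k\,l$.

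For the remainder, since $l=\lfloor S/c\rfloor$ the segment $R=\hat{d}(L)-l\,T$ is parallel to $\hat{d}(L)$ with component sum $S-l\,c<c$; packing it so that every used bin (except possibly one) is at least half full --- and absorbing into those same bins the rounding of fractional item counts arising from $T\to T^{t}$ --- covers $R$ with at most $2c$ bins, so $|\text{min-cover}(R)|\le 2c$. Combining the two bounds yields $|Cover(\hat{d}(L))|\le|\text{min-cover}(\hat{d}(L))| + k\,l + 2c$, as required. I expect the main obstacle to be the per-copy estimate: one must argue carefully that the scaled optimal cover is genuinely feasible for the $T$-instance at the same $\delta^{*}$ (this is where the inequality $lc\le S$, hence $T\le\hat{d}(L)/l$, is used), and that rounding a basic LP solution costs only the additive $k$. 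The monotonicity of min-cover under the componentwise order and the length bound for the remainder $R$ are comparatively routine.
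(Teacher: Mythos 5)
Your argument is correct and reaches the stated bound, but it takes a genuinely different route from the paper's. The paper chains three lemmas: Lemma 5 gives $|\text{Cover}(\hat{d}(L))| \le l\,|\text{min-cover}(T)| + 2c$ (essentially your remainder step), and the bound $l\,|\text{min-cover}(T)| \le |\text{min-cover}(\hat{d}(L))| + kl$ is obtained by passing through the truncated segment $T'$: Lemma 3 shows $|\text{min-cover}(T)| \le |\text{min-cover}(T')| + k$ because $T$ and $T'$ differ in length by at most $\sum_{i=1}^{k} s_i < k$, and Lemma 4 asserts $|\text{min-cover}(\hat{d}(L))| \ge l\,|\text{min-cover}(T')|$ via a packing-ratio contradiction. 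You bypass $T'$ entirely and prove the per-copy estimate with the de la Vega--Lueker/Karmarkar--Karp LP device: scaling an optimal integral cover of $\hat{d}(L)$ by $1/l$ yields a feasible fractional cover of $T$ of value $|\text{min-cover}(\hat{d}(L))|/l$, and rounding up a basic optimal LP solution (at most $k$ nonzero coordinates, one per constraint row) costs at most $k$ extra bins. Your route is more self-contained and, notably, it supplies a rigorous justification exactly where the paper is weakest: Lemma 4's contradiction argument is informal, since an optimal cover of $\hat{d}(L)$ need not decompose into covers of the $l$ segments, whereas your scaling argument is the clean way to obtain that inequality (and it does not even require $c^{*}$ to minimize the packing ratio). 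What the paper's detour through $T'$ buys in exchange is that the truncation device is reused in Corollary 2 to identify the cases in which the $+kl$ term vanishes.
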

The above theorem follows from Lemmas $3$, $4$ and $5$ presented 
below.
\begin{corollary}
	If (i) $T = (n_1*s_i, n_2*s_2, ..., n_k *s_k)$, where for $i \in [1..k]$ the $i$th component is an integer multiple of $s_i$ ; 
	OR (ii) $\sum_{i=1}^{k} s_i = o(c)$ OR $k =o(c)$, then Algorithm $B$ constructs an asymptotically optimal cover for $\hat{d}(L)$.
\end{corollary}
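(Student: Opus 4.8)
The plan is to read asymptotic optimality straight off the additive guarantee of the preceding theorem, $|Cover(\hat{d}(L))| \le |\text{min-cover}(\hat{d}(L))| + k\,l + 2c$, and to verify that under each stated hypothesis the additive error $k\,l + 2c$ is $o(|\text{min-cover}(\hat{d}(L))|)$. First I would record two elementary facts. Since every $(1-\delta)$-vector has length at most $1$, covering the length-$S$ vector $\hat{d}(L)$, where $S=\sum_{i=1}^{k} n_i s_i$ is the total item volume, requires at least $S$ bins, so $|\text{min-cover}(\hat{d}(L))| \ge S$. Second, because $T=\hat{d}_{c}(L)$ has length $c$, the number of full segments satisfies $l \le S/c$, while $c \le \lceil 2/\epsilon\rceil$ is a constant. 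Asymptotic optimality then amounts to $(k\,l + 2c)/|\text{min-cover}(\hat{d}(L))| \to 0$ as the instance grows (i.e. as $S \to \infty$), and since $2c = O(1)$ the only term that can threaten this is $k\,l$.

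For hypothesis (i) I would argue that the $k\,l$ term disappears. That term is precisely the price paid in the proof of the theorem, via the truncation vector $T^{t}$ of Definition 7, for rounding each component $n^{c}_i s_i$ down to the nearest integer multiple of $s_i$. If every component of $T$ is already an integer multiple of $s_i$, then $T^{t}=T$, no fractional items are discarded, and the associated collection of extra bins is empty. Hence $|Cover(\hat{d}(L))| \le |\text{min-cover}(\hat{d}(L))| + 2c$, and since $2c$ is constant while $|\text{min-cover}(\hat{d}(L))| \ge S \to \infty$, the ratio tends to $1$.

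For hypothesis (ii) I would handle the two clauses by charging the truncation leftover in two different currencies. If $k = o(c)$, I use the theorem as stated: $k\,l \le kS/c = (k/c)\,S = o(S)$, and since $S \le |\text{min-cover}(\hat{d}(L))|$ this is $o(|\text{min-cover}(\hat{d}(L))|)$; together with $2c=O(1)$ the whole additive error is sublinear, so the ratio tends to $1$. If instead $\sum_{i=1}^{k} s_i = o(c)$, the crude count ``$k$ leftover items per segment'' is too weak, and I would instead charge the truncation by volume: each segment discards at most one fractional item of each size, of total volume at most $\sum_{i=1}^{k} s_i$, so the $l$ segments together leave volume at most $l\sum_{i=1}^{k} s_i$, packable into $O\!\left(l\sum_{i=1}^{k} s_i\right)$ additional bins. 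This replaces $k\,l$ by $O\!\left((\textstyle\sum_{i} s_i / c)\,S\right) = o(S)$, again yielding a sublinear additive error and a ratio tending to $1$.

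In every case $|Cover(\hat{d}(L))| / |\text{min-cover}(\hat{d}(L))| \to 1$, which is exactly the assertion that Algorithm $B$ produces an asymptotically optimal cover. I expect the one genuinely non-routine point to be the $\sum_{i} s_i = o(c)$ clause: there the bare statement of the theorem does not suffice, and one must reopen its proof to re-account the discarded items by volume rather than by count, checking that the $O\!\left(l\sum_{i} s_i\right)$ leftover bins indeed absorb the rounding present in $\hat{d}(L) - l\,T^{t}$ without a hidden dependence on $k$.
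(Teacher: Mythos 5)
The paper states this corollary immediately after Theorem~1 and supplies no proof of it at all, so there is nothing to compare line by line; what you have written is the natural completion, and it is essentially correct. Your two basic facts are sound: a cover by vectors of length at most $1$ of a vector of total length $S$ needs at least $S$ vectors, and $l \le S/c$. Case (i) correctly observes that the entire $k\,l$ term in Theorem~1 is the price of the truncation $T \mapsto T^{t}$ in Lemma~3, which vanishes when every component of $T$ is already an integer multiple of $s_i$, leaving only the constant $2c$. The $k=o(c)$ clause follows directly from the theorem, and you are right that the $\sum_i s_i = o(c)$ clause cannot be read off the theorem as stated: one must reopen Lemma~3 and bound $|\text{min-cover}(s_1,\dots,s_k)|$ by a volume argument (roughly $2\sum_i s_i$ plus an additive constant) rather than by $k$; your device of pooling the $l$ truncation residues before packing them is the right move, since charging each segment a separate ``$+1$'' would reintroduce an additive $l = \Theta(S/c)$ term. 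The one wrinkle worth flagging is that your own text calls $c \le \lceil 2/\epsilon\rceil$ a constant while the hypotheses $k = o(c)$ and $\sum_i s_i = o(c)$ are only meaningful if $c$ is treated as a growing parameter (i.e., $\epsilon \to 0$); this tension is inherited from the paper, but a careful write-up should state explicitly which quantities are taken to infinity in the phrase ``asymptotically optimal,'' since for fixed $c$ the ratio bound degrades to $1 + k/c$ rather than $1$.
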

\begin{lemma}
	$|\text{min-cover}(T)| \leq |\text{min-cover}(T')| + 
	|\text{min-cover}(s_1, s_2, ..., s_k)| 
	\leq |\text{min-cover}(T')| + k$
\end{lemma}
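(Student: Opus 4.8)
The plan is to prove the chain by exhibiting an explicit cover of $T$ whose size is controlled by the two quantities on the right, and then bounding the second quantity directly. First I would record the elementary but crucial observation that $T$ and its truncation $T' = T^{t}$ differ only by small amounts: the $i$th component of $T'$ is $s_i\lfloor n^c_i\rfloor$ while that of $T$ is $s_i n^c_i$, so $0 \le T_i - T'_i = s_i(n^c_i - \lfloor n^c_i\rfloor) < s_i$ for every $i$. Hence, componentwise, $T - T' < (s_1, s_2, \ldots, s_k)$, and therefore any cover of $(s_1, \ldots, s_k)$ is automatically a cover of the residual $T - T'$.

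For the first inequality I would invoke additivity of covers. Let $A$ be a smallest $\delta_1$-cover of $T'$, so $|A| = |\text{min-cover}(T')|$, and let $B$ be a smallest $\delta_2$-cover of $(s_1, \ldots, s_k)$, so $|B| = |\text{min-cover}(s_1, \ldots, s_k)|$. Since the componentwise sum over $A$ dominates $T'$ and the componentwise sum over $B$ dominates $(s_1, \ldots, s_k) \ge T - T'$, the union $A \cup B$ has componentwise sum at least $T' + (T - T') = T$, so it covers $T$. To see that $A \cup B$ is a legitimate $\delta$-cover for a single admissible $\delta$, note that a $(1-\delta_1)$-vector has length in $[1-\delta_1, 1] \subseteq [1-\delta, 1]$ whenever $\delta \ge \delta_1$, so every vector of $A \cup B$ is a $(1-\delta)$-vector once we set $\delta = \max(\delta_1, \delta_2)$, which still lies in $(0, \tfrac{1}{2})$, and each such vector is consistent with $L$. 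Discarding redundant vectors to restore minimality only shrinks the collection, so $|\text{min-cover}_{\delta}(T)| \le |A| + |B|$; since $|\text{min-cover}(T)| \le |\text{min-cover}_{\delta}(T)|$ by definition, this gives $|\text{min-cover}(T)| \le |\text{min-cover}(T')| + |\text{min-cover}(s_1, \ldots, s_k)|$.

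For the second inequality I would construct a cover of $(s_1, \ldots, s_k)$ using one bin per type. Fix a type $i$. If $s_i > \tfrac{1}{2}$, a single item of type $i$ occupies length $s_i$ with wastage $1 - s_i < \tfrac{1}{2}$; if $s_i \le \tfrac{1}{2}$, pack $\lfloor 1/s_i\rfloor$ items of type $i$, giving length $\lfloor 1/s_i\rfloor s_i \in (1 - s_i, 1]$ and hence wastage strictly below $s_i \le \tfrac{1}{2}$. In either case the resulting bin is a valid nearly-full configuration consistent with $L$ whose $i$th component is at least $s_i$ and whose other components vanish. Taking one such bin for each of the $k$ types produces $k$ configurations whose componentwise sum dominates $(s_1, \ldots, s_k)$; choosing $\delta$ above the maximum of the finitely many wastages (still below $\tfrac{1}{2}$) renders all of them $(1-\delta)$-vectors at once, so this is a genuine $\delta$-cover of size $k$, giving $|\text{min-cover}(s_1, \ldots, s_k)| \le k$.

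The main obstacle is bookkeeping rather than conceptual: the covers $A$ and $B$ may a priori use different wastage thresholds, and a single-type filling bin for a type with $s_i$ just above $\tfrac{1}{2}$ forces its wastage arbitrarily close to $\tfrac{1}{2}$, so I must verify throughout that the common $\delta$ can always be taken strictly inside $(0, \tfrac{1}{2})$. The monotonicity of $(1-\delta)$-vectors in $\delta$ together with the finiteness of the type set makes this routine, after which the two inequalities chain together to yield the stated bound.
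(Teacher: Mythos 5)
Your proof follows essentially the same route as the paper's: decompose $T$ as $T'$ plus a residual dominated componentwise by $(s_1,\dots,s_k)$, combine a cover of $T'$ with a cover of $(s_1,\dots,s_k)$, and bound the latter by $k$. You are in fact more careful than the paper on two points it glosses over --- reconciling the possibly different wastage thresholds of the two covers via $\delta=\max(\delta_1,\delta_2)$, and showing that the one-bin-per-type cover of $(s_1,\dots,s_k)$ actually meets the wastage-at-most-$\delta<\tfrac{1}{2}$ requirement by packing $\lfloor 1/s_i\rfloor$ copies of type $i$ rather than a single item --- so your argument is sound and, if anything, fills gaps in the published one.
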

\begin{proof}
	Notice that $T'$ is obtained by truncating each component $i \in [1..k]$, to the nearest multiple of $s_i$. 
	Therefore, 
	the maximum difference between the length of $T$ and $T'$ is $\sum_{i=1}^{k} s_i < k$. Therefore     the size of the optimal $\delta$-cover for $T$ cannot be more than the sum of the sizes of
	an optimal $\delta$-cover for $T'$ and an  optimal $\delta$-cover for $(s_1, s_2, ..., s_k)$. 
	For $i \in [1..k]$, the item sizes $s_i \in (0, 1)$. 
	Therefore the size of an optimal $\delta$-cover for $(s_1, s_2, ..., s_k)$ 	is at most $k$. 	Hence the result.
\end{proof}
\begin{lemma}
	$|\text{min-cover}(\hat{d}(L))| \geq l* |\text{min-cover}(T')|$
\end{lemma}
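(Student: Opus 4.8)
The plan is to prove the bound in two stages: first reduce the left-hand side to the covering number of $l$ exact copies of $T'$, and then establish a scaling (super-additivity) inequality for that quantity. For the first stage I would observe that, by the way Algorithm $B$ forms the partition, $\hat{d}(L) = l\,T + R$, where $R = \hat{d}(L) - l\,T$ is the discarded last segment and has nonnegative coordinates, while $T' \le T$ coordinatewise because $T'$ is obtained from $T$ by truncating each coordinate $t_i = s_i n_i^{c}$ down to $s_i \lfloor n_i^{c}\rfloor$. Hence $\hat{d}(L) \ge l\,T'$ coordinatewise. Since a $\delta$-cover is, by Definition, any collection of $(1-\delta)$-vectors from $e_\delta(L)$ whose coordinatewise sum dominates the target vector, every cover of $\hat{d}(L)$ is automatically a cover of $l\,T'$; minimizing over covers and over $\delta \in (0,\tfrac{1}{2})$ then gives $|\text{min-cover}(\hat{d}(L))| \ge |\text{min-cover}(l\,T')|$.

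The second and essential stage is to show $|\text{min-cover}(l\,T')| \ge l\,|\text{min-cover}(T')|$. My approach would be a decomposition/peeling argument on an optimal cover $\mathcal{C}$ of $l\,T'$: repeatedly remove from $\mathcal{C}$ a minimal sub-collection of $(1-\delta)$-vectors that already dominates a single copy of $T'$, and argue that this process yields at least $l$ disjoint sub-collections, each a valid cover of $T'$, so that $|\mathcal{C}| \ge l\,|\text{min-cover}(T')|$. The reason one expects $l$ such blocks to exist is that every coordinate of $T'$ is an exact integer multiple of the item size $s_i$, so the demand $l\,T'$ corresponds to a whole number of genuine items and the total coordinate mass to be covered is exactly $l$ times that of $T'$; a complementary route is to invoke the linear-programming relaxation (1)--(3), whose optimum is positively homogeneous, giving $\mathrm{LP}(l\,T') = l\,\mathrm{LP}(T')$, and then to upgrade this to the integral optimum using the integrality of $T'$.

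The hard part, and where I would concentrate the effort, is precisely this super-additive direction. Covering numbers are naturally \emph{sub}additive --- combining covers yields $|\text{min-cover}(u+v)| \le |\text{min-cover}(u)| + |\text{min-cover}(v)|$, so $|\text{min-cover}(l\,T')| \le l\,|\text{min-cover}(T')|$ holds for free, and the content of the lemma is the reverse inequality. The real danger is \emph{consolidation across copies}: a single admissible $(1-\delta)$-vector may simultaneously supply items toward several copies of $T'$, so the peeling step can fail unless one shows that such sharing cannot drive the total count below $l\,|\text{min-cover}(T')|$. I would therefore try to close the gap by exploiting the integral granularity of $T'$ together with the length constraint $[1-\delta,1]$ on admissible vectors --- ruling out covers of $l\,T'$ that are cheaper per copy than the best cover of $T'$ --- and, failing a clean combinatorial decomposition, fall back on the homogeneity of the LP optimum, bounding $|\text{min-cover}(\hat{d}(L))| \ge \mathrm{LP}(l\,T') = l\,\mathrm{LP}(T')$ and controlling the integrality gap at this scale via the rounding guarantees of the Karmarkar--Karp scheme.
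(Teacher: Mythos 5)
Your proposal diverges from the paper's argument in a way that leaves a genuine gap. The paper proves this lemma by an averaging/contradiction argument that leans entirely on step (3) of Algorithm $B$: the segment length $c^{*}$ is chosen so that $T$ has the minimum packing ratio $|Cover_{\delta_{c}}(\hat{d}_{c}(L))|/c$ over all admissible segment lengths, and the paper argues that a cover of $\hat{d}(L)$ of size less than $l\cdot|\text{min-cover}(T')|$ would force at least one of the $l$ identical segments to be coverable at a strictly smaller packing ratio than $T$, contradicting that choice. Your proposal never invokes this minimality; instead you try to establish the unconditional super-additivity $|\text{min-cover}(l\,T')| \ge l\cdot|\text{min-cover}(T')|$, and that statement is false for bin-packing-type covering numbers. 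A standard counterexample already kills it: if one copy of $T'$ consists of three items of size slightly more than $1/3$, a single copy needs two bins while two copies (six items) need only three, so the per-copy cost strictly decreases under replication --- exactly the ``consolidation across copies'' you flag. Neither of your proposed repairs closes this. The peeling argument fails for the reason you yourself identify. The LP route gives only $|\text{min-cover}(\hat{d}(L))| \ge \mathrm{LP}(l\,T') = l\cdot\mathrm{LP}(T')$, and since $\mathrm{LP}(T')$ can fall short of $|\text{min-cover}(T')|$ by a positive constant, the resulting bound is off by $\Theta(l)$, which swamps the additive $k\,l+2c$ error budget of Theorem 1; the Karmarkar--Karp rounding guarantee controls the integrality gap of a single instance and does not upgrade $l\cdot\mathrm{LP}(T')$ to $l\cdot|\text{min-cover}(T')|$.

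Your first stage (the coordinatewise domination $\hat{d}(L)\ge l\,T'$, hence $|\text{min-cover}(\hat{d}(L))| \ge |\text{min-cover}(l\,T')|$) is correct and is a reasonable reduction, but everything after it must be replaced by an argument that uses the ratio-minimality of $c^{*}$ together with the fact that $\hat{d}(L)$ splits into $l$ \emph{identical} segments: assuming $|\text{min-cover}(\hat{d}(L))| < l\cdot|\text{min-cover}(T')|$, one distributes an optimal cover over the $l$ identical segments and concludes by averaging that some segment admits a cover at a ratio beating the minimum, a contradiction. That is the paper's route, and without it (or some substitute exploiting how $T$ was selected) the lemma cannot be derived from scaling properties of $\text{min-cover}$ alone.
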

\begin{proof}
	Notice that $T$ was constructed by partitioning $\hat{d}(L)$ into identical segments of length $c \in [1, \lceil \frac{2}{\epsilon} \rceil]$ with the minimum packing ratio. Suppose $|OPT(\hat{d}(L))| 
	<  l* |\text{min-cover}(T')|$ then this would imply 
	that if we split $\hat{d}(L)$ into $l$ identical segment vectors 
	then at least one of these segment vectors would have a  packing ratio less than $T$.  A contradiction.
\end{proof}
\begin{lemma}
	$|\text{Cover}(\hat{d}(L))| \leq l*|\text{min-cover}(T)| + 2c$.
\end{lemma}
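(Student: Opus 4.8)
The plan is to read the structure of $\text{Cover}(\hat{d}(L))$ straight off step (5) of Algorithm $B$ and then bound the one term not accounted for by the $l$ identical pieces. By construction, $\text{Cover}(\hat{d}(L))$ is the union (counted with multiplicity) of $l$ copies of $\text{min-cover}(T)$, one for each of the $l$ identical length-$c$ segments, together with a single extra cover $\text{min-cover}(R)$ of the residual vector $R := \hat{d}(L) - l\,T$. Counting vectors, this gives the exact identity
\[
|\text{Cover}(\hat{d}(L))| = l\,|\text{min-cover}(T)| + |\text{min-cover}(R)|,
\]
so the whole lemma reduces to showing $|\text{min-cover}(R)| \le 2c$.

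First I would pin down the length of the residual. Since Algorithm $B$ forms $l$ complete segments of length $c$ and discards the last (partial) segment, $l$ is the number of full length-$c$ segments fitting inside $\hat{d}(L)$; hence the component sum (length) of $R$ is strictly less than $c$. Next I would bound the size of a minimal cover of a target of length $<c$ by a merging argument. Every bin (a $(1-\delta)$-vector, or just a unit-capacity bin) has length at most $1$, and in a \emph{minimal} cover at most one bin can have length $\le \tfrac12$: if two bins each had length $\le \tfrac12$, their component-wise sum would still have length $\le 1$, hence form a single legal bin covering the same components, and replacing the two by this merged bin would produce a strictly smaller cover, contradicting minimality. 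Consequently, if the cover uses $b$ bins, the $b-1$ bins of length $>\tfrac12$ already contribute more than $(b-1)/2$ to the total length, which cannot exceed the length of $R$; thus $(b-1)/2 < c$, giving $b < 2c+1$ and therefore $b \le 2c$. Combining this with the displayed identity yields the claim.

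The main obstacle is the subtle point that bins in a $\delta$-cover are constrained to have length at least $1-\delta$, so while the merged bin in the argument above is automatically legal with respect to the upper capacity $1$, I must check the merge does not violate this lower-length requirement. I would dispose of this by recalling that $\text{min-cover}$ is taken as a minimum over all $\delta \in (0,\tfrac12)$ (equivalently, the residual items may be packed by regular bin packing with no lower-fullness constraint), so nothing obstructs the merge and the halving bound applies cleanly. The remaining steps are purely bookkeeping, so I expect this reconciliation of the two-sided length constraint to be the only genuinely delicate point.
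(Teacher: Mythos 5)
Your proof follows essentially the same route as the paper: read the decomposition of $\text{Cover}(\hat{d}(L))$ into $l$ copies of $\text{min-cover}(T)$ plus a cover of the residual $\hat{d}(L)-l\,T$ directly from step (5), and bound the residual's cover by $2c$ using its length. The only difference is that the paper simply asserts the $2c$ bound for a segment of length at most $c$, whereas you supply the standard merging argument (at most one bin at most half full in a minimal cover) that justifies it, so your write-up is, if anything, more complete than the paper's.
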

\begin{proof}
	The Algorithm $B$ splits $\hat{d}(L)$ into $l$ copies of segment vectors $T$ and a last segment vector $\hat{d}(L)-l*T)$ of length at most $c$. Therefore, 	the size of the minimum $\delta$-cover of the last segment vector $\hat{d}(L)-l*T)$  is at most $2c$. 
	Now by concatenating the min $\delta$-covers of $T$ $l$ times along with the min $\delta$-cover of the last segment we get the result.
\end{proof}
\section{Bin Packing Through Irregular Bin Packing}
In this section, we  present  a  recursive algorithm that can be converted into a  dynamic programming  solution to the irregular bin packing problem. Our              algorithm assumes (i) there   exists a   way  of  packing  the items   in $L$ using at   most $m$ unit bins; and (ii)  the  sizes of all    items in $L$     are integer multiples of a small positive   rational number $\delta$ less than $1$. \newline \newline 
Let $L=(a_1,a_2,...,a_n)$     be      a      sequence   of $n$ items with  their  respective    sizes $(s_1,s_2, ... ,s_n)$ in the interval $[0, 1]$ and $\cal{B}$ = \{$ B_1,B_2,...,B_m$ \}  be   a set of $m$ unit capacity bins. Let $\delta \in (0,1]$ be a rational number such that every item size in $L$ can be expressed as an integer multiple of $\delta$. Let  $D  = \lceil \frac{1}{\delta} \rceil$.  At  any   given instance, bins in $\cal{B}$ are classified based on its level   into one  of $D+1$  types : a bin            is  of  type  $i$ if its level is $i/D$, where      $i$        is an   integer in  $[0..D]$. Initially (at instance $0$), all     $m$    bins    are    of type $0$.  Our algorithm assigns   the items in $L$ one at a time onto bins  in $\cal{B}$    in the    order of their occurrence in $L$. The    state / configuration   of  our algorithm   at instance $i$ (i.e. after assigning  the $i$th item)      is specified in  terms of   a  $D+1$ tuple $C^{i} = ( n^{i}_0, n^{i}_1, n^{i}_2,..., n^{i}_{D})$,   where  $n^{i}_{j}$,  $j      \in [0,..D]$,  denotes     the   number  of  bins  of  type $j$. \newline \newline
{\bf Note}: We           classify       bins       based  on their level and not based on their          content. This  reduces the       number of  possible bin configurations and                          there     by  reducing the number of possible        sub-problems   that our dynamic program needs  to consider while  computing an optimal solution. \newline \newline 
Now, we  introduce     definitions that are necessary for presenting our dynamic 
program.
\begin{definitions}
For $i \in [1..n]$, we  define $type(a_i)=\lceil \frac{s_i}  {\delta} \rceil$ to 
be the type of item $a_i$, and          $a_i$   can be assigned to a bin of type $j$      only if  $s_i + j\delta \le 1$. For   a given      bin configuration $C = (n_0, n_1,n_2, ..., n_D)$ and     an item $a_i \in L$, we   define   $Allow(a_i, C) = \{ j : n_j \geq 1  \text { and } s_i + j\delta  \leq 1 \}$.           That is, $Allow(a_i, C)$  is the set of  bin types to which $a_i$  can        be assigned without  violating its capacity constraint.    For a given a bin   configuration $C$, we define $cost^{C}(a_i, j)$ to be $1$ if $a_i$ is assigned to an     empty 
bin of type $j$ in $Allow(a_i, C)$ and $0$ if it is assigned to a non-empty bin of type $j$ in $Allow(a_i, C)$. More formally, 
\begin{equation}
    cost^{C}(a_i, j) = \begin{cases}
               1               & j \in Allow(a_i, C) \text{ and is of type } 0\\
               0               & j \in Allow(a_i, C) \text{ and not of type } 0\\
               \infty          & otherwise 
                        \end{cases}
\end{equation}
\end{definitions}
{\bf Basic Description of Our Algorithm}: Our algorithm assigns the items    in $L$ to bins in $\cal{B}$ in order of their occurrence in $L$. The state  of our algorithm is defined in terms of the    configuration of bins in $\cal{B}$; the configuration of  a bin is defined in    terms of its  level and        not its composition ( the     type 
of items it contains). That is, while assigning an item, our algorithm does not  distinguish  between bins that are  filled   to    the   same  level but differ in their composition. Initially,  (at instance $0$), all   $m$ bins are of type $0$ (empty), so    the inital       state $C^{0}$ of our  algorithm is specified by the $D+1$ tuple $(m, 0, 0, ..., 0)$. Suppose        at    instance   $i-1$, our           algorithm is in state $C^{i-1} = (n^{i-1}_1, n^{i}_2, ...,  n^{i-1}_{D})$, where $n^{i-1}_{j}$,   $j \in [0,..D]$,  denotes   the number of bins of  type $j$. The next item  $a_i$ can  be assigned to  any bin whose type is     in $Allow(a_i, C^{i-1})$.     If our    algorithm  chooses     a bin  of  type   $j \in Allow(a_i,C^{i-1})$ then it          will end up in configuration $C^{i}_{j}$ and would cost $cost^{C^{i-1}}(a_i, j)$ plus the optimal  number of regular bins required for assigning the items in $L[i+1..n]$ starting        in configuration    $C^{i}_{j}$. So, our    algorithm assigns $a_i$ to a  bin type whose cost is  
$\min_{j \in Allow(C^{i-1})(a_i)}$\{  $Assign(C^{i}_{j}, i+1) + Cost^{C^{i-1}}(a_i,j)$\}
\begin{tabbing} 
{\bf PROCEDURE $Assign$($C$, $i$)} \\
Input(s): \= (1) \= $C = (n_0, n_1, n_2, ..., n_D)$ - the initial configuration of the $m$ bins in $\cal{B}$; \\
          \> (2) \> $i$ - the index of the next item in $L$ that needs to be assigned. \\
Output(s): \= The minimum number of regular bins required for assigning items in $L[i..n]$   \\
           \>     \> to bins in $\cal{B}$ in configuration $C$. \\
Begin \= \\
      \>     \= return \= $\min_{j \in Allow(a_i, C)}$ 
      \{ $Assign(C_{j}, i+1) + Cost^{C}(a_i,j)$  \}, where  \\
      \>     \> \> 
      $C_j=(n_0, n_1,...,n_{j-1},n_j-1,..., n_{o-1}, n_o+1, ..., n_{D})$
                                                and $o= j + type(a_i)$. \\
End                                                     
\end{tabbing}
\begin{lemma}
For regular bin packing problem, 
given any request sequence $L=(a_1,a_2,... ,a_n)$   whose item  sizes are 
in the interval $[0, 1]$ and are  integer multiples of a small   rational 
number $\delta \in (0,1)$ and      a  collection $\cal{B}$  of  $m$ unit capacity bins, the  procedure $Assign(C^{0}, 1)$ determines a   $(1+\delta)$ approximate  solution  in  approximately   $\frac{n}{\delta} m^{(\frac{1}{\delta}-1)}$ time, where
$C^{0} = (m, 0, 0, ..., 0)$ is a $D+1$-tuple donoting the initial 
bin configuration.  
\end{lemma}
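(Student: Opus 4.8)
The plan is to prove the statement in two parts: first that $Assign(C^0,1)$ returns the minimum number of unit bins needed to pack $L$ (so that it is at worst a $(1+\delta)$-approximation, and in fact optimal for instances whose sizes are already multiples of $\delta$), and second that the natural memoized evaluation of the recursion runs in the stated time. The conceptual core of the correctness argument is the claim that the type-vector $C=(n_0,\dots,n_D)$ is a \emph{sufficient statistic} for the remaining problem: two partial packings of $a_1,\dots,a_{i-1}$ that induce the same $C$ admit exactly the same set of feasible completions of $a_i,\dots,a_n$ at exactly the same cost. First I would verify this by observing that whether $a_i$ fits into a given bin depends only on that bin's level (its remaining capacity is $1-j\delta$ for a type-$j$ bin), and that placing $a_i$ into any type-$j$ bin produces the same successor type-vector $C_j$ regardless of that bin's composition; hence the family of reachable future states and their costs is a function of $C$ alone.

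Granting the sufficient-statistic property, I would argue by reverse induction on $i$ that $Assign(C,i)$ equals the optimal number of \emph{newly opened} bins required to place $a_i,\dots,a_n$ starting from any partial packing with type-vector $C$, with base case $Assign(C,n+1)=0$. The inductive step is exactly the Bellman recursion written in the procedure: an optimal completion must place $a_i$ into some allowed type $j$, paying $cost^{C}(a_i,j)\in\{0,1\}$ (which is $1$ only when a fresh, type-$0$ bin is opened) and then completing optimally from $C_j$; minimizing over $j\in Allow(a_i,C)$ gives the claim. Specializing to $C^0=(m,0,\dots,0)$ makes the total cost equal the number of bins used, so $Assign(C^0,1)$ is the true optimum for the discretized instance. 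Because all sizes are integer multiples of $\delta$, the level quantization in units of $\delta=1/D$ is exact, so this optimum coincides with the genuine optimum and the $(1+\delta)$ bound holds with room to spare; the slack is what absorbs the rounding of arbitrary sizes up to multiples of $\delta$ when the routine is invoked inside the general scheme.

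For the running time I would convert the recursion into a table indexed by the pair $(C,i)$ and memoize, so that each distinct pair is evaluated once. The key counting observation is that at stage $i$ the reachable configurations satisfy two independent linear constraints: $\sum_{j=0}^{D} n_j = m$ (total bins) and $\sum_{j=0}^{D} j\,n_j = D\sum_{\ell<i}s_\ell$ (total occupied level, which is determined by $i$ since items are assigned in order). These constraints cut the $D+1$ coordinates down to $D-1$ free ones, each bounded by $m$, so there are $O(m^{D-1})$ configurations per stage with $D=\lceil 1/\delta\rceil$. Each table entry performs a minimization over $|Allow(a_i,C)|\le D+1=O(1/\delta)$ successor types, and there are $n$ stages; multiplying gives $O\!\left(\tfrac{n}{\delta}\,m^{1/\delta-1}\right)$, as claimed.

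The step I expect to be the main obstacle is making the sufficient-statistic claim fully rigorous, since it is what licenses replacing the exponentially branching recursion by a polynomial DP without losing optimality: I must check that collapsing bins of equal level never forbids a packing a composition-aware algorithm could realize, and conversely never permits an infeasible one. A secondary point requiring care is the second linear constraint in the configuration count — it is exactly this ``fixed total level at stage $i$'' observation that lowers the exponent from $1/\delta$ to $1/\delta-1$ and makes the bound match; without it a naive count of $\binom{m+D}{D}=\Theta(m^{1/\delta})$ states would be off by a factor of $m$.
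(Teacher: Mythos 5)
Your proposal is correct, and it is both more complete and more careful than the argument the paper actually gives. The paper's proof of this lemma consists only of the running-time count: it observes that the number of subproblems is governed by the number of ways to distribute $m$ bins among the level classes, bounds this by $\binom{m+D-1}{D-1}$, and then simply equates that with $\frac{n}{\delta}m^{1/\delta-1}$ (an ``equality'' that silently absorbs the factor of $n$ stages and the $O(1/\delta)$ work per table entry, neither of which appears on the left-hand side). It says nothing at all about why the level-only state is lossless or why the recursion returns an optimal (hence $(1+\delta)$-approximate) packing. Your sufficient-statistic argument plus reverse induction supplies exactly the correctness half that the paper omits, and your observation that each item's feasibility and cost depend only on bin levels is the right justification for collapsing compositions into levels. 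On the counting side you take a genuinely different route: rather than stars-and-bars over $D$ classes, you impose the two linear constraints $\sum_j n_j = m$ and $\sum_j j\,n_j = D\sum_{\ell<i}s_\ell$ on the reachable states at stage $i$, which cleanly explains why the exponent is $1/\delta - 1$ rather than $1/\delta$ and makes the appearance of the factors $n$ and $1/\delta$ in the final bound explicit. The only minor points worth tightening are the conventions at the boundary (the base case $Assign(C,n+1)=0$ and the value $+\infty$ when $Allow(a_i,C)$ is empty, which the standing assumption that $m$ bins suffice renders harmless on the optimal path); neither affects the validity of your argument.
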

\begin{proof}
Let $D =\frac{1}{\delta}$. The         run-time of $Assign(C^{0}, 1)$ is proportional to the number  of sub-problems, which in turn depends on the 
number         of           bin       configurations that our formulation 
permits. This  is the same as     the number of ways we can partition $m$ 
bins into $\frac{1}{\delta}$ categories   based on its level. This can be 
upper  bounded      by     ${m + D -1} \choose{D-1}$ =  $\frac{n}{\delta} m^{(\frac{1}{\delta}-1)}$.
\end{proof} 
\newline \newline 
{\bf Note}: If       we had       defined the bin configuration in terms 
of the  bin  composition (as usually done for PTAS for bin packing), then   
the number   of   bin types would be bounded by $R$=${M + K} \choose{M}$, 
where $K$ is the     number     of distinct  item sizes and the number of different bin  configurations is bounded by $P$=${n + R} \choose{R}$. So, 
defining      the state of the algorithm in terms of its level instead of 
its composition results in a significant reduction in the       number of 
states without impacting   its approximation guarantee.
\begin{theorem}
For a real $\epsilon \in (0, 1)$ and an integer $c > 1$, given a  request sequence  $L=(a_1,a_2, ... ,a_n)$ with item sizes in the interval $[0,1]$  
and  a  collection $\cal{B}$  of  $m$     unit capacity bins, the dynamic program  $A$($L$, $\cal{B}$) determines      a     $(1+\frac{\epsilon}{c} +\frac{1}{c})$ approximate  solution for regular  bin packing  problem in  approximately $\frac{nc}{\epsilon} m^{(\frac{c}{\epsilon}-1)}$
time.
\end{theorem}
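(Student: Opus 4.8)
The plan is to present $A(L,\mathcal{B})$ as the composition of a rounding step with the procedure $Assign(C^{0},1)$ analysed in the preceding Lemma. First I would fix $\delta=\tfrac{\epsilon}{c}$, so that $D=\lceil 1/\delta\rceil=\lceil c/\epsilon\rceil$, and replace each item size $s_i$ by the smallest multiple $s_i'$ of $\delta$ with $s_i'\ge s_i$, obtaining a rounded list $L'$ all of whose sizes are integer multiples of $\delta$. Because $s_i'\ge s_i$ for every $i$, every bin assignment that is feasible for $L'$ under the constraint $s_i'+j\delta\le 1$ is also feasible for $L$, so the packing $A$ returns on $L'$ is a legal packing of the original list. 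Running $Assign(C^{0},1)$ on $L'$ is precisely the hypothesis of the preceding Lemma with this value of $\delta$, and substituting $\delta=\epsilon/c$ into its bound $\tfrac{n}{\delta}m^{(1/\delta-1)}$ gives the running time $\tfrac{nc}{\epsilon}m^{(c/\epsilon-1)}$ claimed here; this part is immediate.

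For the quality of the packing, the preceding Lemma yields $A(L)\le(1+\delta)\,OPT(L')=(1+\tfrac{\epsilon}{c})\,OPT(L')$, where $OPT(L')$ denotes the optimum of the rounded instance. The theorem then follows once I show that rounding up to multiples of $\delta=\epsilon/c$ costs at most the factor $1+\tfrac{1}{c}$ in the optimum, i.e. $OPT(L')\le(1+\tfrac{1}{c})\,OPT(L)$: composing the two estimates gives $A(L)\le(1+\tfrac{\epsilon}{c})(1+\tfrac{1}{c})\,OPT(L)$, whose dominant terms are $(1+\tfrac{\epsilon}{c}+\tfrac{1}{c})\,OPT(L)$, the lone extra term $\tfrac{\epsilon}{c^{2}}$ being of lower order.

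The hard part will be this rounding estimate, and I expect it to be the main obstacle, because rounding \emph{up} is cheap for large items but expensive in the aggregate for many tiny ones (each gains up to $\delta$, and a naive count only yields an $n\delta$ overhead). The route I would take is to split $L$ at the threshold $\epsilon$. For a large item, $s_i\ge\epsilon=c\delta$ forces $\tfrac{s_i'}{s_i}\le 1+\tfrac{\delta}{s_i}\le 1+\tfrac{1}{c}$, so the rounded large items can be accommodated at a multiplicative cost of $1+\tfrac{1}{c}$ by inflating any packing of the large originals. The small items ($s_i<\epsilon$) I would pack greedily into the residual space, using the standard first/next-fit fact that all but one resulting bin is filled to more than $1-\epsilon$, together with the volume bound $OPT(L)\ge\sum_i s_i$, so that the small items contribute only a lower-order overhead. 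The delicate point, and the step I would be most careful about, is the bookkeeping at the large/small boundary: verifying that the greedy small-item overhead and the $\tfrac{1}{c}$ large-item slack can genuinely be combined into the single factor $1+\tfrac{1}{c}$ (rather than a larger constant) is exactly where the parameters $\epsilon$ and $c$ must be balanced, and is the part of the argument that needs the most care.
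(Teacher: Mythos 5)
Your proposal follows essentially the same route as the paper: set $\delta=\epsilon/c$, round each size up to the nearest multiple of $\delta$, invoke the preceding Lemma for the $(1+\delta)$ guarantee and the $\frac{n}{\delta}m^{(1/\delta-1)}$ running time, and absorb the rounding into the $\frac{1}{c}$ term using the fact that an item of size at least $\epsilon$ grows by a relative factor of at most $1+\frac{1}{c}$. The one place you go beyond the paper is the treatment of items smaller than $\epsilon$: the paper simply asserts ``without loss of generality all items in $L$ are larger than $\epsilon$'' with no supporting argument, whereas you correctly flag this as the delicate step and sketch the standard greedy/volume argument for it, and your observation that the two errors compose multiplicatively to $(1+\frac{\epsilon}{c})(1+\frac{1}{c})$ --- leaving a lower-order $\frac{\epsilon}{c^{2}}$ term that the paper's purely additive bookkeeping silently drops --- is also the more careful accounting.
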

\begin{proof}
Without    loss of generality we assume that all items in $L$ are  larger 
than $\epsilon$. Let $\delta = \epsilon/c$. First, we round the size   of 
each item $a_i$ in $L$ to  the smallest multiple of $\delta$ greater than 
or    equal to $s_i$. Let $L'$ be the modified instance of $L$. This will 
induce  a rounding error of at most $\delta =\frac{\epsilon}{c}$ for each 
item. Since each item is at  least $\epsilon$ in size. The rounding error 
is at most $\frac{1}{c}$. Now, if    we run the invoke      the algorithm 
$A$($L'$, $\cal{B}$) it will determine an optimal solution      for $L'$.
From Lemma $6$, we know that $A$  determines a   $(1+\delta)$ approximate  solution  for $L'$ in  approximately        $\frac{n}{\delta}m^{(\frac{1} {\delta}-1)}$ time. Since  the rounding error for converting $L$ to $L'$ 
is at most   $\frac{ \epsilon}{c}$. The approximation guarantee of its 
solution for $L$ would be  = $(1+\delta+\frac{1}{c})= (1+ \frac{\epsilon} {c}+\frac{1}{c})$ and the run-time would    be       $\frac{nc}{\epsilon} m^{(\frac{c} {\epsilon}-1)}$ time
\end{proof}

\end{document}